\documentclass[twocolumn]{autart}
\usepackage[update,prepend]{epstopdf}
\usepackage{amsmath,rotating}
\usepackage{hyperref}
\usepackage{amsfonts}
\usepackage{amssymb}
\usepackage{times,subfigure}
\usepackage{comment}
\usepackage{epstopdf}
\usepackage{algorithmic}
\usepackage{algorithm}
\usepackage[scanall]{psfrag}
\newtheorem{Theorem}{Theorem}

\newtheorem{Lemma}{Lemma}
\newtheorem{Proposition}{Proposition}

\newtheorem{Remark}{Remark}


\def\sqw{\hfill\hbox{\lower.1ex\hbox{$\sqcup$}
    \kern-1.02em\lower.1ex\hbox{$\sqcap$}}\ }

\DeclareMathOperator*{\argmin}{\arg\min}

\DeclareMathOperator*{\trace}{tr}


\newenvironment{proof}{\vspace{1ex}\noindent{\bf Proof}\hspace{0.5em}}
	{\hfill\qed\vspace{1ex}}


\newcommand{\av}{\mathbf{a}}
\newcommand{\bv}{\mathbf{b}}

\newcommand{\hv}{\mathbf{h}}

\newcommand{\ev}{\mathbf{e}}

\newcommand{\R}{\mathbb{R}}

\newcommand{\SUB}{\mathcal{S}}
\newcommand{\deltabf}{\mathbf{\delta}}

\begin{document}
\begin{frontmatter}

\title{On the Nuclear Norm Heuristic \\for a Hankel Matrix Completion Problem}

\author[First]{Liang Dai}
\author[First]{and Kristiaan Pelckmans}

\address[First]{Division of Systems and Control,
	Department of Information Technology, \\
	Uppsala University, Sweden \\
	e-mail: liang.dai@it.uu.se, kristiaan.pelckmans@it.uu.se.}

\maketitle
\begin{keyword}                           
System Identification; Matrix Completion.\end{keyword}

\begin{abstract}
    This note addresses the question if and why the nuclear norm heuristic can recover an impulse response generated by a stable single-real-pole system, if elements of the upper-triangle of the associated Hankel matrix are given.
    Since the setting is deterministic, theories based on stochastic assumptions for low-rank matrix recovery do not apply in the considered situation.
    A 'certificate' which guarantees the success of the matrix completion task is constructed by exploring the structural information of the hidden matrix. Experimental results and discussions regarding the nuclear norm heuristic applied to a more general setting are also given.
\end{abstract}
\date{}
\end{frontmatter}

\begin{section}{Introduction}
Techniques of convex relaxation using the nuclear norm heuristic have become increasingly popular in the systems and control community, see e.g. the examples reported in \cite{c8}, \cite{c1} and the discussions therein.
This note provides a theoretical justification for the usage of the nuclear norm heuristic when it is applied to an fundamental task in systems theory, i.e. to recover the impulse response of a system from the first few entries of the related series. Precisely, we make the following assumptions throughout the note:
(1) the provided entries are exact, i.e. there are no noise present,
(2) the first $n$ entries of the impulse response are provided while the last $n-1$ entries are to be completed.

The problem considered can be casted as a special case of the 'matrix completion' problem \cite{c6,c7,c3}. However, in this work, the sampled entries are given deterministically, while the 'matrix completion' problems are typically analyzed using random sampling patterns. Furthermore, the underlying matrix is a structural (Hankel) matrix. These differences make the theories in the literature not applicable directly to this problem.

While this task can be easily solved using standard techniques \cite{c8}, the rationale for this work is that to provide a complete picture for understanding how the nuclear norm heuristic performs on this fundamental problem.

This contribution is organized as follows. The main theorem and its proof are given in Section 2. Section 3 discusses a more general matrix completion problem and concludes the note. Appendix gives the technical proofs for the Facts in Section 2.

The following notational conventions will be used. Vectors are denoted in boldface, scalars are denoted in lowercase, matrices as capital letters, and sets are represented as calligraphic letters.
$\mathcal{H}_n$ denotes the set of $n\times n$ Hankel matrices, $I_n$ denotes the identity matrix of size $n\times n$, $\ev_i$ denotes the unit vector with only the $i$-th element to be one and all the other elements zero,  $\|\cdot\|_{\ast}$ represents the nuclear norm (sum of all the singular values) of a matrix, $\|\cdot\|_{2}$ represents the spectral norm of a matrix, and $\|\cdot\|_{F}$ represents the Frobenius norm of a matrix.
\end{section}

\begin{section}{Results}
The following theorem states the result formally.
\vspace{-3mm}
\begin{Theorem}{\em
	Given $-1<h<1$, define vector $\mathbf{h} \in \mathbb{R}^n$ as $\mathbf{h}=[1,h,h^2, \dots,h^{n-1}]^T$,
	and matrix $ G_0\in \mathcal{H}_n$ as $\mathbf{h}\mathbf{h}^T$.
	Consider the following application of the nuclear norm heuristic:
	\begin{align}\label{eq.opt}
		\hat{G}_0
		\triangleq  &\argmin_{G\in\mathcal{H}_n} \|G\|_{*} \\
		&\mbox{ \ s.t. \ } G(i,j) = G_0(i,j), \forall\  (i+j)\le n+1, \nonumber
	\end{align}
	it holds that $\hat{G}_0$ is unique and $\hat{G}_0 = G_0$.
}\end{Theorem}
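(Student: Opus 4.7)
My plan is to invoke the standard dual-certificate strategy for nuclear-norm minimization. Since $G_0 = \|\mathbf{h}\|^2\, \mathbf{u}\mathbf{u}^T$ with $\mathbf{u} = \mathbf{h}/\|\mathbf{h}\|$ has rank one, its nuclear-norm subdifferential is
\[
\partial\|G_0\|_* = \bigl\{\mathbf{u}\mathbf{u}^T + W \,:\, \mathbf{u}^T W = 0,\ W\mathbf{u}=0,\ \|W\|_2 \le 1\bigr\}.
\]
Write $T = \{\mathbf{u}\mathbf{a}^T + \mathbf{b}\mathbf{u}^T : \mathbf{a},\mathbf{b}\in\R^n\}$ for the tangent space to the rank-one variety at $G_0$ and $\mathcal{S} = \{\Delta\in\mathcal{H}_n : \Delta(i,j) = 0 \text{ for } i+j\le n+1\}$ for the direction of feasible perturbations. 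A standard argument reduces the theorem to proving (i) the existence of $Y = \mathbf{u}\mathbf{u}^T + W$ with $W\in T^\perp$, $\|W\|_2 < 1$, and $\langle Y,\Delta\rangle = 0$ for every $\Delta \in \mathcal{S}$, together with (ii) the injectivity $T\cap\mathcal{S}=\{0\}$.

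Ingredient (ii) can be dispatched directly: any $\Delta = \mathbf{h}\mathbf{a}^T + \mathbf{b}\mathbf{h}^T$ whose first row and first column vanish (which is forced by $\Delta\in\mathcal{S}$) satisfies $a_j = -h^{j-1}b_1$ and $b_i = h^{i-1}b_1$, hence $\mathbf{a}=-b_1\mathbf{h}$, $\mathbf{b}=b_1\mathbf{h}$, and $\Delta = 0$. Ingredient (i) reduces to $n-1$ linear conditions on $W$: since $\Delta \in \mathcal{S}$ is parameterized by its $n-1$ free anti-diagonal values $\delta_{n+1},\dots,\delta_{2n-1}$, we need $\sum_{i+j-1=k} Y(i,j) = 0$ for $k = n+1,\dots,2n-1$, and a direct count shows that the anti-diagonal sum of $\mathbf{u}\mathbf{u}^T$ on index $k$ equals $(2n-k)\,h^{k-1}/\|\mathbf{h}\|^2$. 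These $n-1$ equations prescribe the anti-diagonal sums of $W$; because $T^\perp$ has dimension $(n-1)^2$, the linear system is vastly underdetermined, and the real challenge is exhibiting a solution whose spectral norm is strictly below one.

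The construction I would attempt starts from an \emph{anchor} Hankel matrix supported on anti-diagonals $n+1,\dots,2n-1$ with entries $-h^{k-1}/\|\mathbf{h}\|^2$, chosen so that the anti-diagonal sums of $\mathbf{u}\mathbf{u}^T$ are exactly cancelled; I would then correct the anchor by adding a carefully chosen element of $T$ so as to enforce $W\in T^\perp$ while preserving the prescribed sums, iterating if necessary. Once $W$ is in hand, the closing inequality
\[
\|G_0+\Delta\|_* \,\ge\, \|G_0\|_* + \langle \mathbf{u}\mathbf{u}^T,\Delta\rangle + \|P_{T^\perp}(\Delta)\|_* \,\ge\, \|G_0\|_* + (1-\|W\|_2)\,\|P_{T^\perp}(\Delta)\|_*,
\]
obtained by pairing $G_0+\Delta$ against the sign pattern of $P_{T^\perp}(\Delta)$ and using $\langle Y,\Delta\rangle = 0$, forces $P_{T^\perp}(\Delta)=0$ as soon as $\|W\|_2 < 1$, after which (ii) yields $\Delta = 0$ and hence the uniqueness claim. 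The principal obstacle I anticipate is controlling $\|W\|_2$ after the $T^\perp$-correction, which is precisely where the geometric form $\mathbf{h} = (1,h,\dots,h^{n-1})^T$ and the assumption $|h|<1$ will have to be exploited to obtain a sharp bound.
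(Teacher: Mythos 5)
Your framework is the right one and in fact coincides with the paper's, just phrased in the standard dual-certificate language: your ingredient (ii), the injectivity $T\cap\mathcal{S}=\{0\}$, is exactly the paper's Proposition 2 (there stated as $H\neq 0\Rightarrow QHQ\neq 0$, proved by the same first-row/first-column argument you give), and your ingredient (i) plus the closing inequality is the content of the paper's Lemmas 1 and 3. Your computation of the anti-diagonal sums of $\mathbf{u}\mathbf{u}^T$ and the dimension count are also correct. The problem is that you stop exactly where the paper's real work begins: the existence of $W\in T^{\perp}$ with $\|W\|_2<1$ satisfying the $n-1$ anti-diagonal conditions is the entire technical content of the result, and your proposal for it (an anchor supported on the lower anti-diagonals, followed by a $T^{\perp}$-correction, ``iterating if necessary'') is not a proof. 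Projecting the anchor onto $T^{\perp}$ destroys the anti-diagonal conditions, restoring those destroys membership in $T^{\perp}$, and while an alternating scheme between these two affine subspaces may converge to a point in their intersection, nothing in your argument controls the spectral norm of the limit; you explicitly flag this as ``the principal obstacle,'' which is an accurate self-assessment of the gap.

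For comparison, the paper resolves this by an explicit closed-form certificate rather than an iterative one. It splits $Q=I_n-P$ as $Q=Q_1+Q_2$ with $Q_1,Q_2$ complementary orthogonal idempotents ($Q_1Q_2=Q_2Q_1=0$, $Q_i^2=Q_i$, both annihilating $\mathbf{h}$), where $Q_1-Q_2$ is given explicitly as an anti-triangular-looking matrix of powers of $h$, and sets $M_0=-h^{\,n}(Q_1-Q_2)$. The idempotent structure gives $\|Q_1-Q_2\|_2=1$, hence $\|M_0\|_2=|h|^n<1$ with no estimation needed, and $QM_0Q=M_0$ so that $M_0\in T^{\perp}$ automatically; the anti-diagonal conditions then hold because $M_0$ is engineered to agree with $P$ \emph{entrywise} on every lower anti-diagonal, a much stronger property than matching the sums. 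Verifying that the proposed $Q_1,Q_2$ really have these algebraic properties occupies the paper's appendix. If you want to complete your proof along your own lines, you would need either to carry out that verification or to find some other explicit $W$ whose spectral norm you can actually bound below one; as it stands, the certificate does not exist on the page.
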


\begin{Remark}
	Since matrix $G_0$ is of low rank (of rank one), when
	\begin{align}\label{eq.rank}
		\tilde{G}_0
		\triangleq &\argmin_{G\in\mathcal{H}_n}\  \mbox{rank}(G) \\
		&\mbox{ \ s.t. \ } G(i,j) = G_0(i,j), \forall (i+j)\le n+1, \nonumber
	\end{align}
	is solved, one has that $\tilde{G}_0 = G_0$, see e.g. \cite{c4,c10}.
\end{Remark}

\begin{subsection}{Proof of Theorem 1}
The {\em roadmap} for the proof of the Theorem 1 is as follows:
Lemma 1 gives a sufficient condition for the recovery of $G_0$ by solving eq. (\ref{eq.opt}).
Lemma 2 and Lemma 3 are devoted to build the 'certificate' which guarantees that the sufficient condition in Lemma 1 is satisfied.

Based on the matrices $G_0$ and $G$ in Theorem 1, define:
\begin{equation}
	H = G_0 - G,
	\label{eq.H}	
\end{equation}
Notice that by construction, all the entries of $H$ in the upper triangle part are zero, so $H$ can be decomposed as
	\begin{equation}
		H = \sum_{i=1}^{n-1}v_iG_i,
	\end{equation}
	where $\{G_i\}_{i=1}^{n-1}$ are the basis matrices with the elements of the $i$th lower anti-diagonal equal to 1 and the others equal to zero and $v_i \in \mathbb{R}, \forall i = 1,\cdots,n-1$.
	For instance, when $n=3$, one has that
	\begin{equation}
		H= \left( \begin{array}{ccc}
		0 & 0 & 0 \\
		0 & 0 & v_1 \\
		0 & v_1 & v_2 \end{array} \right)\\
        =
		v_1\left( \begin{array}{ccc}
		0 & 0 & 0 \\
		0 & 0 & 1 \\
		0 & 1 & 0 \end{array} \right)
		+
		v_2\left( \begin{array}{ccc}
		0 & 0 & 0 \\
		0 & 0 & 0 \\
		0 & 0 & 1 \end{array} \right).
		\label{eq.Q}
	\end{equation}

Define the projection matrix $$P = \frac{G_0}{\|\mathbf{h}\|_2^2} $$ and its complement projection matrix as $Q = I_n -P.$

Proposition 1 will be used later, which characterizes the nuclear norm as the dual norm of the spectral norm for a given matrix \cite{c3}.
\begin{Proposition}
	Given $A\in\mathbb{R}^{n\times n}$ matrix, then
	\begin{equation}
		\|A\|_* = \sup\{\trace(MA): \|M\|_2\leq 1, M\in\mathbb{R}^{n\times n}\}.
		\label{dual}
	\end{equation}
\end{Proposition}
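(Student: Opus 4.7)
The plan is to prove both inequalities by reducing everything to the singular value decomposition of $A$. Writing $A = U\Sigma V^T$ with orthogonal $U,V$ and $\Sigma = \diag(\sigma_1,\dots,\sigma_n)$, the definition of the nuclear norm gives $\|A\|_* = \sum_i \sigma_i$. Because orthogonal matrices have unit spectral norm and can be moved through the trace via its cyclic property, the identity to be established reduces to a statement about diagonal matrices, which is essentially a finite-dimensional H\"older-type inequality between the $\ell^\infty$ and $\ell^1$ norms on the vector of singular values.

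For the direction $\sup \ge \|A\|_*$, I would exhibit an explicit maximizer, namely $M = VU^T$. Since $M$ is a product of orthogonal matrices its spectral norm equals one, so it is feasible, and using the cyclic property of the trace one computes $\trace(MA) = \trace(VU^T U \Sigma V^T) = \trace(\Sigma) = \sum_i \sigma_i = \|A\|_*$. This single choice of $M$ already shows that the supremum is attained and at least as large as $\|A\|_*$.

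For the opposite direction $\sup \le \|A\|_*$, I would take an arbitrary feasible $M$ and change variables by $\tilde{M} = V^T M U$. Invariance of the spectral norm under orthogonal transformations gives $\|\tilde{M}\|_2 = \|M\|_2 \le 1$, and the cyclic trace identity yields
\begin{equation*}
\trace(MA) = \trace(V^T M U \Sigma) = \sum_i \tilde{M}_{ii}\,\sigma_i.
\end{equation*}
The step I would flag as the only non-cosmetic one is bounding each diagonal entry by the spectral norm, via $|\tilde{M}_{ii}| = |\ev_i^T \tilde{M}\, \ev_i| \le \|\tilde{M}\|_2 \cdot \|\ev_i\|_2^2 \le 1$; once this is in hand, non-negativity of the singular values delivers $\trace(MA) \le \sum_i \sigma_i = \|A\|_*$. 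The main obstacle is really only bookkeeping: making sure the cyclic manipulations are done carefully and that the diagonal-versus-spectral-norm inequality is cited or proved cleanly, since no genuinely deep machinery is needed beyond the SVD itself.
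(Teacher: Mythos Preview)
Your argument is correct and is exactly the standard SVD-based proof of the duality between the nuclear norm and the spectral norm. Note, however, that the paper does not actually give its own proof of this proposition: it merely states the result and cites \cite{c3} (Recht, Fazel, Parrilo) for it, so there is nothing to compare against beyond observing that your proof supplies the details the paper omits.
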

\vspace{-2mm}
The following result will be needed in the Lemma 3.
\begin{Proposition}
	Given $H$ as defined in eq. (\ref{eq.H}), if $H \neq 0$, then $QHQ \neq 0$.
\end{Proposition}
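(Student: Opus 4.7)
The plan is to argue by contradiction: assume $QHQ = 0$ and derive $H = 0$, contradicting the hypothesis $H \neq 0$.

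First I would exploit symmetry. Since $H = G_0 - G$ is a difference of two square Hankel matrices, it is itself Hankel and hence symmetric. Writing $I_n = P + Q$ and expanding $H = (P+Q)H(P+Q)$, the assumption $QHQ = 0$ gives $H = PH + HP - PHP$. Setting
\[
\mathbf{u} \triangleq \frac{H\mathbf{h}}{\|\mathbf{h}\|_2^2}, \qquad c \triangleq \frac{\mathbf{h}^T H \mathbf{h}}{\|\mathbf{h}\|_2^4},
\]
and using $P = \mathbf{h}\mathbf{h}^T/\|\mathbf{h}\|_2^2$ together with $H^T = H$, this rewrites as
\[
H = \mathbf{h}\mathbf{u}^T + \mathbf{u}\mathbf{h}^T - c\,\mathbf{h}\mathbf{h}^T,
\]
a very rigid rank-$\le 2$ form built entirely from $\mathbf{h}$.

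Next I would bring in the upper-triangle structure of $H$. By definition $H(i,j) = 0$ whenever $i+j \le n+1$; in particular the entire first column of $H$ vanishes, i.e.\ $H\mathbf{e}_1 = \mathbf{0}$. Applying the displayed identity to $\mathbf{e}_1$ and using $h_1 = 1$ yields $u_1\mathbf{h} + \mathbf{u} - c\,\mathbf{h} = \mathbf{0}$, so $\mathbf{u} = (c - u_1)\mathbf{h}$. Reading off the first coordinate of this relation forces $c = 2u_1$, and hence $\mathbf{u} = u_1\mathbf{h}$. Substituting back into the expression for $H$, the three rank-one terms cancel and $H = u_1\mathbf{h}\mathbf{h}^T + u_1\mathbf{h}\mathbf{h}^T - 2u_1\mathbf{h}\mathbf{h}^T = 0$, which is the desired contradiction.

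The crux of the argument is the first step: the observation that $QHQ = 0$ together with the symmetry of $H$ pins $H$ down to a two-parameter family supported on $\mathbf{h}$. Once this is in hand, only a single row (equivalently column) of the upper-triangle constraint, combined with $h_1 = 1 \neq 0$, is enough to collapse $H$ to zero, so I do not anticipate a serious obstacle beyond setting up the reduction cleanly.
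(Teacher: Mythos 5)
Your proof is correct and follows essentially the same route as the paper: expand $H=(P+Q)H(P+Q)$, use $QHQ=0$ to pin $H$ to a low-rank form built from $\mathbf{h}$, and then kill it with the vanishing first column and $h_1=1$. The only difference is cosmetic --- by writing $H=PH+HP-PHP$ you get a manifestly symmetric parametrization directly, whereas the paper first writes $H=\mathbf{h}\mathbf{a}^T+\mathbf{b}\mathbf{h}^T$ and needs an extra rank-one row-space argument to symmetrize; your shortcut is slightly cleaner but not a different proof.
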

\begin{proof}
	We prove that the only possibility for $QHQ = 0$ to hold is when $H = 0$.
    Notice that $H = (P+Q)H(P+Q)$, expanding this equality, we have that
     $$H = PHP + PHQ + QHP + QHQ.$$
	Hence if $QHQ = 0$, we have that
\begin{align*}
H &= PHP + PHQ + QHP\\
  &= PH + QHP.
\end{align*}
    Since $P = \frac{\mathbf{h}\mathbf{h}^{T}}{\|\mathbf{h}\|_2^2}$, the previous relation implies that $H$ can be represented as  $\hv \av^{T} + \bv \hv^{T}$ where $\av, \bv \in \mathbb{R}^{n}.$ Since $H$ is symmetric, it holds that
    \vspace{-1.5mm}
    $$\hv \av^{T} + \bv \hv^{T} = \av \hv^{T} + \hv \bv^{T},$$
    or equivalently
\begin{align}
\hv (\bv -\av)^{T} = (\bv -\av) \hv^{T}.
\label{eq.rankone}
\end{align}
Given the fact in eq. (\ref{eq.rankone}), the two rank-one matrices $\hv (\bv -\av)^{T}$ and $(\bv -\av) \hv^{T}$ will have the same row space and column space, which implies that $\bv-\av = k \hv$, where $k = \frac{(\bv - \av)^{T}\hv}{\|\hv\|_2^2}$.

This implies that $H$ can be written as
$$ H = \hv \av^{T} + \bv \hv^{T} = \hv \av^{T} + \av \hv^{T} + k \hv\hv^{T}, $$
\vspace{-1.5mm}
i.e.
$$ H= (\av + \frac{k}{2}\hv)\hv^{T} + \hv(\av + \frac{k}{2}\hv)^{T}.$$
\vspace{-1.5mm}

Let $\mathbf{c} = (c_1,c_2,\cdots,c_n)^{T} = \av + \frac{k}{2}\hv$.
	Notice that the $i$-th element of the first column of $H$ equals $h^{i-1}c_1 + c_i$.
	By construction, the first column of $H$ is a zero vector. Hence for $i=1$, it holds that $2c_1 = 0$, i.e. $c_1 =0$.
    Thus the $i$-th element of the first column of $H$ equals $c_i$, which implies that $c_2 =  \cdots = c_n=0$, i.e. $\mathbf{c} = 0$. This concludes that $H = 0$.
\end{proof}

Lemma 1 provides a sufficient condition for Theorem 1 to hold.
\begin{Lemma}
	If for any $H\ne 0$ as in eq. (\ref{eq.H}), one has that
	\begin{equation}
		|\trace(PH)| < \|QHQ\|_{\ast},
		\label{eq.ass}
	\end{equation}
	the optimization problem (\ref{eq.opt}) recovers $G_0$ exactly.
\end{Lemma}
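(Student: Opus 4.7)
The plan is to use Proposition 1 to build a ``dual certificate'': a matrix $M$ with $\|M\|_2 \le 1$ satisfying $\trace(MG_0) = \|G_0\|_*$ and $-\trace(MH) > 0$ for the given admissible $H \neq 0$. Once such an $M$ is produced, Proposition 1 immediately yields $\|G_0 - H\|_* \ge \trace(M(G_0 - H)) = \trace(MG_0) - \trace(MH) > \|G_0\|_*$, so every feasible $G = G_0 - H$ with $H \neq 0$ is strictly worse than $G_0$ in nuclear norm, which gives both optimality and uniqueness of $G_0$.

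I would search for $M$ in the parametric form $M = P + W$ with $W = QWQ$ and $\|W\|_2 \leq 1$. Since $P$ has unit spectral norm on the one-dimensional range $\mathrm{span}(\hv)$ while $W$ maps into the orthogonal subspace $\mathrm{range}(Q)$, a block-diagonal argument gives $\|M\|_2 \le \max(1,\|W\|_2) = 1$. Using $Q\hv = 0$ and $G_0 = \hv\hv^T$ one checks $\trace(PG_0) = \|\hv\|_2^2 = \|G_0\|_*$ and $\trace(WG_0) = 0$, so $\trace(MG_0) = \|G_0\|_*$ for every admissible $W$, and the only freedom left is to choose $W$ so as to make $-\trace(WH)$ as large as possible. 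Writing $-\trace(WH) = \trace(W\cdot(-QHQ))$ and applying Proposition 1 a second time to $-QHQ$ gives an upper bound of $\|QHQ\|_*$; this bound is in fact attained on the restricted class $\{W:W=QWQ,\ \|W\|_2\le 1\}$, because any contraction $M'$ achieving the outer supremum can be replaced by $QM'Q$ without changing the value $\trace(M'(-QHQ))$ or increasing the spectral norm. Combining the pieces yields $\|G_0 - H\|_* \ge \|G_0\|_* - \trace(PH) + \|QHQ\|_*$, which is strictly greater than $\|G_0\|_*$ under the hypothesis $|\trace(PH)| < \|QHQ\|_*$.

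The main technical point is the parametric form of the certificate and the verification that, after restricting $W$ to satisfy $W = QWQ$, the dual optimum $\|QHQ\|_*$ is still attained; both rest on the orthogonality of $\mathrm{range}(P)$ and $\mathrm{range}(Q)$ and on the fact that $QHQ$ is already invariant under sandwiching by $Q$. Once this block structure is handled cleanly, the rest of the argument is just two applications of Proposition 1 together with the lemma's hypothesis.
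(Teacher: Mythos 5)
Your proposal is correct and is essentially the paper's own argument: your certificate set $\{P+W:\ W=QWQ,\ \|W\|_2\le 1\}$ is exactly the subdifferential $\{P+VBV^T:\ \|B\|_2\le 1\}$ used in the paper, and both proofs pick the free block via Proposition 1 so that it attains $\|QHQ\|_*=\|V^THV\|_*$, after which the hypothesis $|\trace(PH)|<\|QHQ\|_*$ gives the strict increase in nuclear norm. The only difference is cosmetic: you verify the defining properties of a subgradient ($\|M\|_2\le 1$ and $\trace(MG_0)=\|G_0\|_*$) directly rather than citing the known form of the subdifferential.
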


\begin{proof}
	Let $V \in \mathbb{R}^{n\times (n-1)}$ be a matrix which satisfies $VV^{T} = Q$ and $V^{T}V = I_{n-1}$.
	The sub-gradients of $\|\cdot\|_{*}$ at $G_0$ are given as the set (see e.g. \cite{c3}):
	\begin{equation}
		\SUB_{\mathbf{h}}
		= \left\{ P + VBV^T \ : \|B\|_2 \le 1\right\}.
	\end{equation}
	By the property of sub-gradient, it holds that for any $H$ as in eq. (\ref{eq.H}),
 	\begin{align}
 		\|G_0+H\|_{*} \geq \|G_0\|_{*} + \langle H, F\rangle, \nonumber
 	\end{align}
 	where $F \in \mathbb{R}^{n\times n}$ is any matrix which belongs to $\SUB_{\mathbf{h}}$.

 	Hence, for any $H$, if there exists a matrix in $\SUB_{\mathbf{h}}$, i.e. a $B$ with $\|B\|_2 \le 1$, such that
	\begin{align} \nonumber
 		\left\langle H, P + VBV^T\right\rangle > 0  \nonumber
	\end{align}
or equivalently
\begin{align}
 \trace (HP) > \langle V^THV,-B\rangle,
\label{eq.ine1}
\end{align}
	then $\|G_0+H\|_{*} > \|G_0\|_{*}$ holds, which implies Theorem 1.

 We are left to find a matrix which satisfies inequality (\ref{eq.ine1}) given the assumption (\ref{eq.ass}). From eq. (\ref{eq.ass}), we have that
 	\begin{equation}\nonumber
		|\trace (HP)| <  \|QHQ^T\|_{*},
 	\end{equation}
	and $Q$ is a projection matrix onto an $n-1$ dimensional subspace, then
	\begin{equation}\nonumber
		\|QHQ^T\|_{*}
		=\|V^THV\|_{*},
	\end{equation}
	which gives that
 	\begin{equation} \nonumber
 		|\trace (HP)|
 		< \|V^THV\|_{*}.
	\end{equation}

	Furthermore, it follows from Proposition 1 that there exists a matrix $B_1$ with $\|B_1\|_2\le1$, such that
	\begin{equation}\|V^THV\|_{*} = \left\langle V^THV, B_1\right\rangle, \nonumber\end{equation}
	therefore it holds that
	\begin{equation}|\trace (HP)| < \left\langle V^THV, B_1\right\rangle.\nonumber \end{equation}
	Hence
	\begin{equation*}
    \trace (HP) > -\left\langle V^THV, B_1\right\rangle = \langle V^THV,-B_1\rangle
    \end{equation*}
	 holds, which gives that the inequality (\ref{eq.ine1}) holds for $B_1$. This concludes the proof.
\end{proof}

Next, we prove that the condition in Lemma 1 will always hold whenever $H \ne 0$. Lemma 2 constructs a matrix $M_0$ which will be used in Lemma 3 to constructs the 'certificate' $M_1$.

   	
\begin{Lemma}
	Given the matrices $G_i,P,Q\in\R^{n\times n}$ defined as before, then there exists a matrix $M_0\in\R^{n\times n}$ with $\|M_0\|_2< 1$, such that
	\begin{align}
		\trace(QG_iQM_0 - G_iP) = 0, \ \forall i = 1,2,\dots,n-1.
		\label{eq.tre}
	\end{align}
\end{Lemma}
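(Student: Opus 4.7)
My plan is to reduce the lemma to an anti-diagonal-sum condition on $N := Q M_0 Q$, construct $N$ from a Hankel ansatz, and verify the spectral norm bound. By cyclicity, $\trace(Q G_i Q M_0) = \trace(G_i Q M_0 Q)$, and since $G_i$ has ones exactly at positions with $j+k=n+1+i$, $\trace(G_i X)$ equals the sum of entries of $X$ on the $i$-th lower anti-diagonal. A direct calculation with $P = \hv\hv^T / \|\hv\|_2^2$ gives
$$
\trace(G_i P) = \frac{(n-i)\, h^{n+i-1}}{\|\hv\|_2^2}, \qquad i = 1,\dots,n-1.
$$
So the task becomes: exhibit $N$ with $N\hv = 0$ (equivalently $N = QNQ$) whose $i$-th lower anti-diagonal sums to the value above and whose spectral norm is less than one; then $M_0 := N$ suffices. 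More generally, one may take $M_0 = N + cP$ for any $|c| < 1$, since $P$ and $N$ act on orthogonal subspaces, so the trace conditions are unaffected and $\|M_0\|_2 = \max(|c|, \|N\|_2) < 1$.

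To build $N$ I would adopt a Hankel ansatz $N = Q\bigl(\sum_{j=1}^{n-1}\mu_j G_j\bigr)Q$, whose $n-1$ free parameters match the $n-1$ constraints. After expanding $Q = I - P$ and using the explicit form of $P$, the $i$-th constraint becomes a linear equation in which $\mu_i$ has coefficient proportional to the length of the anti-diagonal, with smaller correction terms from other $\mu_j$ scaled by powers of $h$. The system can be solved by forward substitution, yielding $|\mu_i| = O(h^{n+i-1}/\|\hv\|_2^2)$.

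For the norm bound I would use $\|N\|_2 \le \|N\|_F$ and sum entries. A back-of-the-envelope feasibility check: even in the worst regime $|h| \to 1$, where the targets approach values of order $(n-i)/n$, spreading the required mass uniformly over each anti-diagonal makes every entry at most roughly $1/n$, giving $\|N\|_F^2 \lesssim (n-1)/(2n) < 1$. The main obstacle is making this estimate rigorous for the specific $N$ produced by the ansatz and ensuring strictness uniformly in $h \in (-1,1)$; the key should be the factor $\|\hv\|_2^2 = (1-h^{2n})/(1-h^2)$ in the denominator, which scales like $n$ precisely when the targets would otherwise be large, furnishing the cancellation needed to obtain $\|M_0\|_2 < 1$.
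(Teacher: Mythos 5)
Your reduction of the lemma to an anti-diagonal condition on $N = QM_0Q$ is sound, and your formula $\trace(G_iP) = (n-i)h^{n+i-1}/\|\hv\|_2^2$ is correct. But what you have written is a strategy, not a proof, and the two steps you defer are precisely the hard ones. First, you assert that the linear system for the coefficients $\mu_j$ in the ansatz $N = Q\bigl(\sum_j \mu_j G_j\bigr)Q$ "can be solved by forward substitution, yielding $|\mu_i| = O(h^{n+i-1}/\|\hv\|_2^2)$." The coefficient matrix $A_{ij} = \trace(G_i Q G_j Q)$ is not triangular (the projection by $Q=I-P$ couples all anti-diagonals), so forward substitution does not apply; invertibility does hold — $A$ is the Gram matrix of the $QG_iQ$, which are linearly independent by the paper's Proposition 2 — but you neither invoke this nor control the size of $A^{-1}$, and without that control the claimed bound on $|\mu_i|$, with a constant uniform in $n$ and in $h \to \pm 1$, is unsubstantiated. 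Second, your Frobenius estimate $\|N\|_F^2 \lesssim (n-1)/(2n)$ is computed for an idealized matrix whose mass is spread uniformly over each anti-diagonal; the actual $N$ produced by the ansatz is $Q(\sum_j\mu_jG_j)Q$ with the \emph{corrected} $\mu_j$, and you give no argument that its norm stays below $1$ uniformly in $h \in (-1,1)$. You acknowledge this yourself ("the main obstacle is making this estimate rigorous"), which is an admission that the proof is not complete.

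For comparison, the paper avoids all of these quantitative difficulties by writing down a closed-form certificate: it constructs explicit matrices $Q_1, Q_2$ with $Q_1+Q_2 = Q$, shows via four algebraic identities that $Q_1, Q_2$ are complementary orthogonal projections inside the range of $Q$, and sets $M_0 = -h^n(Q_1 - Q_2)$. This matrix satisfies $QM_0Q = M_0$, agrees with $P$ \emph{entrywise} on the strict lower anti-triangle (so the trace conditions hold term by term, with no linear system to solve), and has spectral norm exactly $|h|^n < 1$, with no limiting case to worry about. If you want to salvage your route, the essential missing work is a quantitative bound on $\|A^{-1}\|$ (or a Neumann-series argument treating the $P$-corrections as a perturbation of the diagonal system $\trace(G_iG_j) = \delta_{ij}(n-i)$) followed by an honest Frobenius bound for the resulting $N$; as it stands, the lemma is not proved.
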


\begin{proof}
	We will give a construction of such matrix $M_0$.
	Let $r>0$ denote the norm of vector $\hv$, which clearly satisfies $$r^2 = \|\hv\|_2^2 = 1+ h^2 + \dots + h^{2n-2}.$$
	We construct two matrices $Q_1\in\R^{n\times n}$ and $Q_2\in\R^{n\times n}$
    	which satisfy the following two equations:
	\begin{align}
    \label{eq.q1pq2}
		r^2(Q_1 + Q_2)
		= r^2 Q
		= r^2I_n - G_0,
	\end{align}
    and
    \begin{align}
		&r^2(Q_1 - Q_2) =  \label{eq.q1mq2}\\
		&{\scriptsize \nonumber
		    \begin{bmatrix}
			-h^n               &  -h^{n+1}                  & -h^{n+2}              & \cdots               & -h^{2n-2}  &s\\
			-h^{n+1}           &  -h^{n+2}                  & \vdots                &-h^{2n-2}             & s    &-1\\
			-h^{n+2}           &  \vdots                    & -h^{2n-2}             & s     & -1                 &-h \\
			\vdots             &  -h^{2n-2}                 & s      & -1                   & \vdots             &\vdots\\
			-h^{2n-2}          &  s          & -1                    & \vdots               & \vdots             &-h^{n-3}\\
			s   &  -1                        & -h                    & \cdots               & -h^{n-3}           &-h^{n-2}
			\end{bmatrix}},
	\end{align}
 where $s = h+h^3+\dots+h^{2n-3}$.
	
	The matrices $Q_1, Q_2$ satisfy the following properties ( the proofs of them are left to the appendix):

	\begin{itemize}
	\item{\em Fact 1:}
	\begin{equation}
    (Q_1+Q_2)(Q_1+Q_2) = (Q_1-Q_2)(Q_1-Q_2).
    \label{eq.fact1}
    \end{equation}

	\item{\em Fact 2:}
    \begin{equation}
    (Q_1+Q_2)(Q_1-Q_2) = (Q_1-Q_2)(Q_1+Q_2).
    \label{eq.fact2}
    \end{equation}

    \item{\em Fact 3:}
    $$ Q_1Q_2 = Q_2Q_1 = 0.$$

	\item{\em Fact 4:}
	$$Q_1^2 = Q_1,\ \ Q_2^2 = Q_2.$$
	\end{itemize}

Fact 3 and Fact 4 imply that the eigenvalues of matrix $(Q_1-Q_2)$ are in the set $\{  1, -1 , 0\}$, which gives that the spectral norm of $(Q_1-Q_2)$ is 1.

	These properties lead us to consider the following choice of $M_0$
	\begin{equation}
		M_0 = -h^{n}(Q_1-Q_2).
		\label{eq.M}
	\end{equation}
	Now we can prove that matrix $M_0$ satisfies all the equalities given in eq. (\ref{eq.tre}) based on these Facts.
	First, notice that the equalities in eq. (\ref{eq.tre}) are equivalent to the following equalities
	\begin{equation}
		\trace\left(G_i(QM_0Q - P)\right) = 0, \ \forall i = 1,2,\dots,n-1.
		\label{eq.tre1}
	\end{equation}
	The term $QM_0Q - P$ in eq. (\ref{eq.tre1}) can be calculated as follows:
	\begin{align}
	  	&QM_0Q - P\label{eq.reduction}\\
		=\,&-h^n(Q_1+Q_2)(Q_1-Q_2)(Q_1+Q_2) - P \nonumber\\
        =\,& -h^n(Q_1-Q_2)(Q_1+Q_2) - P \nonumber\\
        =\,& -h^n(Q_1-Q_2) - P \nonumber\\
		=\,& M_0-P. \nonumber
    	\end{align}

    So, proving the equalities in eq. (\ref{eq.tre1}) boils down to prove that
    	\begin{equation}
		\trace\left(G_i(M_0-P)\right) = 0, \ \forall i = 1,2,\dots,n-1.
		\label{eq.tre2}
	\end{equation}

	Notice that $M_0$ has the same elements as $P$ in the lower anti-diagonal part, so eq. (\ref{eq.tre2}) holds, which in turn implies  eq. (\ref{eq.tre}).

Finally, notice that $\|M_0\|_2 = |h|^n$, which is less than 1. This gives that $M_0$ is the desired matrix and concludes the proof.
\end{proof}

Based on the constructed $M_0$ in Lemma 2, we can certify that:
\begin{Lemma}
	For any $H\ne 0$, we have that
	\begin{equation}
		|\trace(PH)| < \|QHQ\|_\ast.
		\label{eq.hp}
	\end{equation}
\end{Lemma}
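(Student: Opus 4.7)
The plan is to chain together Lemma 2, the trace-norm duality, and Proposition 2 as follows. Write $H=\sum_{i=1}^{n-1} v_i G_i$ as in the decomposition at the start of the proof. By the linearity of the trace and of the maps $X\mapsto PX$, $X\mapsto QXQ$, summing the identities of Lemma 2 weighted by $v_i$ yields
\begin{equation*}
\trace(QHQM_0) - \trace(HP) = 0,
\end{equation*}
that is, $\trace(PH)=\trace(QHQ\,M_0)$. This turns the left-hand side of the claim into a quantity that is directly controlled by $\|QHQ\|_*$.

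Next I would invoke the standard duality between the nuclear and spectral norms (Proposition 1) in the form of the trace inequality $|\trace(AB)|\le \|A\|_*\|B\|_2$. Applying it to $A=QHQ$ and $B=M_0$ gives
\begin{equation*}
|\trace(PH)| \;=\; |\trace(QHQ\,M_0)| \;\le\; \|QHQ\|_*\,\|M_0\|_2.
\end{equation*}
Since Lemma 2 provides the explicit bound $\|M_0\|_2=|h|^n<1$, the right-hand side is strictly less than $\|QHQ\|_*$ provided $\|QHQ\|_*>0$.

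To close the argument I would rule out $QHQ=0$ using Proposition 2: whenever $H\ne 0$, we have $QHQ\ne 0$, so $\|QHQ\|_*>0$ and the strict inequality survives. Combining these steps yields $|\trace(PH)|<\|QHQ\|_*$ for every nonzero admissible $H$.

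The only subtle point, which I would emphasize in the write-up, is the transition from the weak duality bound $\le$ to the strict bound $<$ in Lemma 3; it relies on two facts working in tandem: the strict contraction $\|M_0\|_2<1$ supplied by Lemma 2, and the strict positivity of $\|QHQ\|_*$ supplied by Proposition 2. Everything else is a one-line computation via linearity of the trace, so I would not expect any computational obstacle; the main conceptual content sits entirely in Lemma 2, which has already been established.
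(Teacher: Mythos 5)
Your proof is correct and follows essentially the same route as the paper: both rest on the identity $\trace(HP)=\trace(QHQM_0)$ obtained by summing Lemma 2 over the decomposition $H=\sum_i v_iG_i$, the strict bound $\|M_0\|_2<1$, and Proposition 2 to guarantee $\|QHQ\|_*>0$. The only difference is the final step, where the paper exhibits an explicit dual certificate $M_1=M_0+\delta\,QHQ$ inside the spectral-norm ball making $\trace(QHQM_1-HP)=\delta\|QHQ\|_F^2>0$, whereas you apply the H\"older-type bound $|\trace(QHQ\,M_0)|\le\|QHQ\|_*\|M_0\|_2$ directly; both are valid and rely on the same ingredients.
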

\vspace{+2mm}
\begin{proof}
	We distinguish between two cases, namely
	\begin{equation}
		\trace(PH) < \|QHQ\|_\ast,
		{\label{ineq.lm2}}
	\end{equation}
	and
	\begin{equation}
		-\trace(PH) < \|QHQ\|_\ast.\nonumber
	\end{equation}
	We will give a derivation of eq.  (\ref{ineq.lm2}), the latter inequality follows along the same lines.
    	With the application of Proposition 1, it follows that to prove eq. (\ref{ineq.lm2}) is equivalent to prove that
	\begin{equation}
		\sup_{\|M\|_2\le 1} \trace\left(QHQM - HP\right) > 0. \label{eq.1}
	\end{equation}
	Notice that $H=\sum_{i=1}^{n-1} v_iG_i$, and that by construction of $M_0$ in Lemma 2,
	we have that
	\begin{equation}
		\trace\left(QHQM_0 - HP\right) \\ \nonumber
		= \sum_{i=1}^{n-1} v_i \trace (QG_iQM_0 - G_iP) = 0. \nonumber
	\end{equation}
	Next, observe that $M_0$ is strictly inside the ball $\mathcal{B}_M = \{M: \|M\|_2\le 1, M \in \mathbb{R}^{n \times n}\}$, hence there exists a small value $\delta >0$
	such that
	\begin{equation}
		M_1 = M_0 + \delta( QHQ),	
	\end{equation}
	will also be inside $\mathcal{B}_M$.
	Since $H \neq 0$, it follows that $QHQ \neq 0$ given by Proposition 2,  which implies that
	\begin{equation}
		\trace\left(QHQM_1 - HP\right)
		= \delta\trace{(QHQQHQ)} \\ \nonumber
		= \delta \|QHQ\|_{F}^2\nonumber
	\end{equation}
	is positive. This certifies eq. (\ref{eq.1}) and hence eq. (\ref{ineq.lm2}), which in turn concludes the proof of Lemma 3.
\end{proof}

In conclusion, application of Lemmas 1, 2 and 3 gives Theorem 1.
\end{subsection}
\end{section}

\begin{section}{Discussions}

	The previous sections study rank one Hankel matrix completion problem where the revealed entries follow a deterministic pattern. It is natural to raise the question whether the nuclear norm heuristic will still work when the rank of the Hankel matrix is greater than 1. This is not always the case. We will provide a numerical illustration to show this.

Consider a second order system with the impulse response given by $\{h_1^{i-1}+h_2^{i-1}\}_{i=1}^{\infty}$, where $h_1,h_2 \in \mathbb{R}$ and $-1<h_1,h_2<1$ represent the two poles of the system. Assume that $n=10$, then the matrix completion problem based on the nuclear norm heuristic is given as
	\begin{align}
        \label{eq.multpl}
		\hat{G} = &\argmin_{G\in\mathcal{H}_{10}} \|G\|_{*} \\ \nonumber
		&\mbox{ \ s.t. \ } G(i,j) = G_0(i,j), \forall i+j\le 11. \nonumber
	\end{align}
	Fig. \ref{fig.exp2} displays $\|G_0\|_{*}- \|\hat{G}\|_{*}$ for different choices of $h_1$ and $h_2$, which are chosen from $h_1 = -0.94:0.05:0.94$ and $h_2 = -0.94:0.05:0.94$.
	From this experiment, it becomes clear that  $G_0$ does not always has minimal nuclear norm, and recovery by eq. (\ref{eq.multpl}) will not necessarily succeed. However, it is worthwhile to mention that, in most cases, the nuclear norm heuristic gives the correct recovery.

Hence we conclude the note with the following open questions:
(1) A rigorous characterization of when the nuclear norm heuristic will work in the stable multiple-pole system case is in order. By inspecting the proofs for the single pole case, we can see that the Lemmas 1 and 3 are still applicable. When the matrix $M_0$ in Lemma 2 is constructed, then the 'certificate' $M_1$ can be constructed using the same idea as in Lemma 3. However, construction of $M_0$ becomes more complicated in this situation. (2)When the nuclear norm heuristic doesn't work, see the cases in the previous experiment, it is interesting to find out what kind of additional assumptions could assist the heuristic to work. (3) The studies in this note assume noiseless data, it is not clear how this assumption can be relaxed in the noisy case.

 \begin{figure}[htbp]
\centering
\includegraphics[width = 85mm]{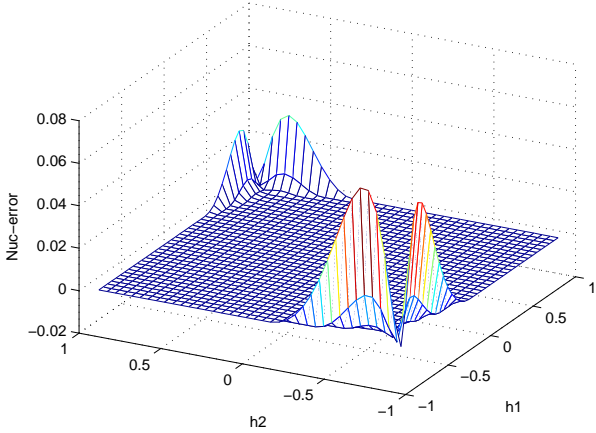}
  \caption{
  	This figure displays $\|G_0\|_{*}$- $\|\hat{G}\|_{*}$ for a range of 2 real poles.
	It is seen that the nuclear norm objective value is not always minimal for the true system $G_0$ for many choices of $h_1$ and $h_2$,
	implying that the heuristic will not always work for such systems.
	Note that the difference equals zero for the case where $h_1=h_2$ as confirmed by Theorem 1. }
\label{fig.exp2}
\end{figure}

\end{section}

\appendix
\begin{section}{Proofs of Fact 1, 2, 3 and 4 in Lemma 2}

Define $\Delta = r^2(Q_1-Q_2)$, and its $k$th column as $\Delta_{k} =\deltabf_{k}+hr^2\ev_{n-k+1},$
in which
\begin{align*}
\deltabf_{k} =\left[ -h^{n+k-1},\cdots, -h^{2n-1},-1,\cdots,-h^{k-2}\right]^{T}.
\end{align*}
\subsection{Proof of Fact 1}
	Notice that to prove (\ref{eq.fact1}) is equivalent to prove
	$$r^2(Q_1+Q_2)r^2(Q_1+Q_2) = \Delta^2.$$
	By definition, one has that
	$$r^2(Q_1+Q_2)r^2(Q_1+Q_2) = (r^2I_n - G_0)^2= r^4 I_n-r^2(\mathbf{h}\mathbf{h}^T).$$
    So, we are left to prove that
    $$\Delta^2 = r^4 I_n-r^2(\mathbf{h}\mathbf{h}^T).$$
	We will calculate out the off-diagonal entries and the on-diagonal entries of $\Delta^2$ separately.
	\begin{itemize}
	\item{\em The off-diagonal elements.}

		Take for any $1\leq k_1,k_2\le n$ the corresponding columns from $\Delta$ as $\Delta_{k_1}$ and $\Delta_{k_2}$ (assume that $k_1<k_2$ without loss of generality). The $(k_1,k_2)$ and $(k_2,k_1)$ elements of $\Delta^2$ are given by
	\begin{align*}
		\Delta_{k_1}^T\Delta_{k_2} &=(\deltabf_{k_1}+hr^2\ev_{n-k_1+1})^{T}(\deltabf_{k_2}+hr^2\ev_{n-k_2+1}) \\
        &= \deltabf_{k_1}^{T}\deltabf_{k_2} + hr^2\left( \deltabf_{k_1}^{T}\ev_{n-k_2+1} +\ev_{n-k_1+1}^{T}\deltabf_{k_2}\right).
        \end{align*}

        It holds that
        \begin{align*}
        &\delta_{k_1}^{T}\delta_{k_2} = \\
        \vspace{+3.5mm}
		&\quad \, \left[ -h^{n+k_1-1},-h^{n+k_1},\cdots,-h^{2n+k_1-k_2-1}\right]\\
        &\qquad \quad \quad \quad \left[ -h^{n+k_2-1},-h^{n+k_2},\cdots,-h^{2n-1}\right]^{T} \\
        \vspace{+2.5mm}
		 & + \left[ -h^{2n+k_1-k_2},-h^{2n+k_1-k_2+1},\cdots,-h^{2n-1}\right]\\
        &\qquad \qquad \qquad \qquad \, \, \,\left[ -1,-h,\cdots,-h^{k_2-k_1-1}\right]^{T} \\
        \vspace{+2.5mm}
		&  + \left[ -1,-h,\cdots,-h^{k_1-2}\right]\\
        &\qquad \quad \quad \, \, \left[ -h^{k_2-k_1},-h^{k_2-k_1+1},\cdots,-h^{k_2-2}\right]^{T},
	\end{align*}
and also
\begin{align*}
\delta_{k_1}^{T}\ev_{n-k_2+1} +  \ev_{n-k_1+1}^{T}\delta_{k_2}= -h^{2n+k_1-k_2-1}-h^{k_2-k_1-1}.
\end{align*}
	Combing these terms and some algebraic calculations give that
	\begin{align*}
		\Delta_{k_1}^T\Delta_{k_2}&= -(h^{k_2+k_1-2}+h^{k_2+k_1}+\cdots+h^{2n+k_2+k_1-4}) \\
		&= -h^{k_1+k_2-2}r^2,
	\end{align*} which is desired.

\item{\em The diagonal elements.}

For $1\le k\le n$, we need to verify that
\begin{align*}
&\Delta_{k}^{T}\Delta_{k} =\\
&(1+h^2+\cdots+h^{2n-2})^2-(1+h^2+\cdots+h^{2n-2})h^{2k-2},
\end{align*}
in which $\Delta_{k} =\deltabf_{k}+hr^2\ev_{n-k+1}.$

 Notice that
 \begin{align*}
 \Delta_{k}^{T}\Delta_{k}&= \|\deltabf_{k}\|_2^2 + 2hr^2\deltabf_{k}^{T}\ev_{n-k+1} + h^2r^4\\
 & =(h^{2n+2k-2}+h^{2n+2k}+\cdots+h^{4n-4}+h^{4n-2})\\
      &\ \ \ +(1+h^2+\cdots+h^{2k-4}) - 2r^2h^{2n}+ h^2r^4.
 \end{align*}

 Furthermore, it holds that
 \begin{align*}
 h^{4n-2} - 2r^2h^{2n}+ h^2r^4& = (hr^2 - h^{2n-1})^2\\
 & =(h+h^3+\cdots+h^{2n-3})^2.
 \end{align*}

It remains to verify that
	\begin{align}\nonumber
		&(h^{2n+2k-2}+h^{2n+2k}+\cdots+h^{4n-4})\\ \nonumber
      &\ \ \ +(h+h^3+\cdots+h^{2n-3})^2 +(1+h^2+\cdots+h^{2k-4}) \\ \nonumber
		&=  (1+h^2+\cdots+h^{2n-2})^2-(1+h^2+\cdots+h^{2n-2})h^{2k-2}, \nonumber
	\end{align}
	which is equivalent to verify
	\begin{align*}
		(1+h^2+\cdots+h^{4n-4})+(h+h^3+\cdots+h^{2n-3})^2 \\
= (1+h^2+\cdots+h^{2n-2})^2.
	\end{align*}
	This follows from the reasonings as below
	\begin{align}\nonumber
		&{\tiny \Leftrightarrow \frac{(1-h^2)(1-h^{4n-2})+h^2(1-h^{2n-2})^2}{(1-h^2)^2} = \frac{(1-h^{2n})^2}{(1-h^2)^2}} \nonumber \\
		&\Leftrightarrow 1 + h^{4n} - 2h^{2n} = (1 - h^{2n})^2.\nonumber
	\end{align}
\end{itemize}

\subsection{Proof of Fact 2}
We firstly prove that the vector $\mathbf{h}$ lies in the null space of $\Delta$. Take for any $1\leq k \le n$ the corresponding column from the matrix $\Delta$, we have that
        \begin{align*}
        		\Delta_{k}^T\mathbf{h} &= \deltabf_k^{T}\hv + hr^2 \ev_{n-k+1}^{T}\hv \\
        & = -(h^{n+k-1}+h^{n+k+1}+\cdots+h^{3n-k-1}) \\
        & \quad \, - (h^{n-k+1}+h^{n-k+3}+\cdots+h^{n+k-3})\\
        & \quad \, + (h^{n-k+1}+h^{n-k+3}+\cdots+h^{3n-k-1})\\
        & =  0.
       \end{align*}
 Hence  $(Q_1-Q_2)\mathbf{h} = \mathbf{h}^{T}(Q_1-Q_2) = 0$ holds. Therefor it holds that $(Q_1-Q_2)P = P(Q_1-Q_2) = 0$. With this, the Fact 2 can be concluded by the following
    \begin{align*}
		&(Q_1+Q_2)(Q_1-Q_2) = (Q_1-Q_2)(Q_1+Q_2) \\
		&\Leftrightarrow (I_n - P)(Q_1-Q_2) = (Q_1-Q_2)(I_n - P) \\
        &\Leftrightarrow  P(Q_1-Q_2) = (Q_1-Q_2)P.
	\end{align*}

\subsection{Proof of Fact 3}
	From eq. (\ref{eq.fact1}), we have that
    \begin{align*}
    Q_2Q_1 + Q_1Q_2 = -Q_2Q_1-Q_1Q_2.
    \end{align*}
    From eq. (\ref{eq.fact2}), we have that
     \begin{align*}
    Q_2Q_1 - Q_1Q_2 = -Q_2Q_1+Q_1Q_2.
    \end{align*}
    Hence we can conclude that $Q_1Q_2=Q_2Q_1=0$.

\subsection{Proof of Fact 4}
	As shown in the proof of Fact 2, we have that $(Q_1-Q_2)P = 0.$
    Together with $(Q_1+Q_2)P = 0$, we have that $Q_1P=0$ and $Q_2P= 0$. Hence, the Fact 4 can be concluded as follows
    \begin{align*}\nonumber
    &Q_1 = Q_1(Q_1+Q_2+P) = Q_1^2 + Q_1Q_2 + Q_1P = Q_1^2, \\
    &Q_2 = Q_2(Q_1+Q_2+P) = Q_1Q_2 + Q_2^2 + Q_2P = Q_2^2.
    \end{align*}
\end{section}

\end{document}